\theoremstyle{plain}
  \newtheorem{thm}{Theorem}[section]
  \newtheorem{prop}[thm]{Proposition}
  \newtheorem{defn-prop}[thm]{Definition-Proposition}
  \newtheorem{lem}[thm]{Lemma}
  \newtheorem{cor}[thm]{Corollary}
\theoremstyle{definition}
  \newtheorem{defn}[thm]{Definition}
  \newtheorem{ex}[thm]{Example}
  \newtheorem{rem}[thm]{Remark}
\def\F{\mathbb{F}}
\def\SSS{\mathfrak{S}}
\begin{document}

\title{Stratification and Enumeration of Boolean Functions by Canalizing Depth}

\author[Q.~He]{Qijun He} \address{Department of
  Mathematical Sciences \\ Clemson University \\ Clemson, SC 29634-0975, USA}
\email{qhe@g.clemson.edu, macaule@clemson.edu}

\author[M.~Macauley]{Matthew Macauley} 

\thanks{Partially supported by NSF grant DMS-1211691.}

\keywords{Boolean function, Boolean network, canalizing depth, canalizing function, enumeration, extended monomial layer, nested canalizing function}

\subjclass[2010]{06E30}

\begin{abstract}
Boolean network models have gained popularity in computational systems biology over the last dozen years. Many of these networks use canalizing Boolean functions, which has led to increased interest in the study of these functions. The canalizing depth of a function describes how many canalizing variables can be recursively ``picked off'', until a non-canalizing function remains. In this paper, we show how every Boolean function has a unique algebraic form involving extended monomial layers and a well-defined core polynomial. This generalizes recent work on the algebraic structure of nested canalizing functions, and it yields a stratification of \emph{all} Boolean functions by their canalizing depth. As a result, we obtain closed formulas for the number of $n$-variable Boolean functions with depth $k$, which simultaneously generalizes enumeration formulas for canalizing, and nested canalizing functions. 
\end{abstract}

\maketitle

\section{Introduction}

Boolean networks were invented in 1969 by S.~Kauffman, who proposed them as models of gene regulatory networks \cite{kauffman69metabolic}. They were slow to catch on, but since a seminal paper \cite{albert2003topology} from 2003, where Albert and Othmer modeled the segment polarity gene in the fruit fly \emph{Drosophila melanogaster}, they have emerged as popular models for a variety of biological networks. Random Boolean networks (RBNs) have been studied throughout the years, with various restrictions on the functions or wiring diagrams to better reflect salient properties of actual biological networks. For example, without such restrictions, RBNs display chaotic behavior in the sense that they are very sensitive to small perturbations. In contrast, biological systems must be robustly designed \cite{li2004yeast} in order to withstand a variety of internal (e.g., mutation or gene knockout) and external (e.g., environmental) changes. In 1942, the geneticist H.~Waddington defined the concept of \emph{canalization} to study this robustness. Over 30 years later in \cite{kauffman1974large}, Kauffman introduced the notion of canalizing Boolean functions in order to accurately reflect the behavior of biological systems in the setting of Boolean network models. Another thirty years after that, Kauffman and collaborators further expanded the canalization concept and introduced the class of nested canalizing functions \cite{kauffman2003random}, which can be thought of as functions that are fully ``recursively canalizing.''

In the last decade, canalizing functions have been extensively studied by researchers in the fields of mathematics, biology, physics, computer science, and electrical engineering. For example, Shmulevich and Kauffman showed that canalizing functions have lower activities and sensitivities than random Boolean functions, and this causes Boolean network models using these functions to be more stable; see \cite{shmulevich2004activities} and \cite{kauffman2004genetic}. More work on the dynamical stability of canalizing Boolean networks was done in \cite{moreira2005canalizing} and in \cite{karlsson2007order}, where the authors explored the relationship between the proportion of canalizing functions in a network, and whether it lies in the ordered or chaotic dynamical regime, or near the so-called critical threshold. The evolution of canalizing Boolean networks was studied in \cite{szejka2007evolution}. Fourier analysis has shown that canalizing Boolean networks maximize mutual information \cite{klotz2014canalizing}. An exact formula was derived for the number of Boolean canalizing functions in \cite{just2004number}. Canalizing functions have been generalized from Boolean to over general finite fields in \cite{murrugarra2012number}.

Nested canalizing functions (NCFs) have also gained significant attention. In \cite{peixoto2010phase} and \cite{kadelka2014nested}, the authors study the phase diagram of Boolean networks with NCFs. A recursive formula for the number of NCFs was derived in \cite{jarrah2007nested}, where they were shown to be what the electrical engineering community calls unate cascade functions \cite{bender1978asymptotic}. NCFs have been studied algebraically through the lens of toric varieties \cite{jarrah2007discrete}, and in \cite{li2013boolean}, where the authors obtained a unique algebraic form by writing an NCF in extended monomial layers. This allowed the authors to enumerate the number of NCFs. It also provided the tools for the development of an algorithm in \cite{hinkelmann2012inferring} to reverse-engineering a nested canalizing Boolean network from partial data. In \cite{layne2012nested}, the authors generalized the notion of both canalizing and nested canalizing functions by introducing the class of partially nested canalizing functions. Loosely speaking, these are the functions that are ``somewhat recursively canalizing.'' The dynamics of Boolean networks built with these functions has been studied in \cite{layne2012nested} and \cite{jansen2013phase}.

In this article, we carry out a detailed mathematical study on canalization of Boolean functions. Instead of thinking of partially (or fully) nested canalizing functions as a subclass of Boolean functions, we consider canalization as a property of \emph{all} Boolean functions. We modify the notion of \emph{canalizing depth} from \cite{layne2012nested} to quantify the degree to which a function exhibits a recursive canalizing structure. From here, we show that every Boolean function has a unique algebraic form using extended monomial layers, generalizing what was done for NCFs in \cite{li2013boolean}. Once one ``peels off'' these layers, a unique non-canalizing \emph{core polynomial} remains. This gives a well-defined stratification of \emph{all} Boolean functions by canalizing depth and monomial layers, which includes the canalizing, non-canalizing, and NCFs as special cases. We say that a function is \emph{$k$-canalizing} if it has canalizing depth at least $k$. Our stratification allows us to derive exact formulas for the number the $k$-canalizing functions on $n$ variables. The special cases of $k=1$ and $k=n$ yield the enumeration results of canalizing, and nested canalizing functions from \cite{just2004number} and \cite{li2013boolean}, respectively. 

This paper is organized as follows. After introducing necessary preliminaries in Section~\ref{sec:prelims}, we define $k$-canalizing functions, canalizing depth and core functions in Section~\ref{sec:k-canalizing}. Next, we characterize Boolean functions by a unique polynomial form in Section~\ref{sec:layers} and use this to stratify all Boolean functions by extended monomial layers and their core polynomials, which are slighly different from the aforementioned core functions. In Section~\ref{sec:enumeration}, we use this structure to derive exact enumeration formulas for the number of functions with a fixed canalizing depth. Finally, we end in Section~\ref{sec:conclusions} with some concluding remarks and directions of current and future research.

\section{Canalizing and nested canalizing functions}\label{sec:prelims}

To make this paper self-contained we will restate some well-known definitions; see, e.g., \cite{kauffman2003random}. This is also needed because there are slight variations in certain definitions throughout the literature. Let $\F_2=\{0,1\}$ be the binary field, and let $f\colon\F_2^n\to\F_2$ be an $n$-variable Boolean function. 
\begin{defn}
A Boolean function $f(x_1,\ldots,x_n)$ is \emph{essential} in the variable $x_i$ if there exists a sequence $a_1,\ldots,a_{i-1},a_{i+1},\ldots,a_n\in \F_2$ such that 
\[
f(a_1,\ldots,a_{i-1},0,a_{i+1},\ldots,a_n)\neq f(a_1,\ldots,a_{i-1},1,a_{i+1},\ldots,a_n).
\]
In this case, we say that $x_i$ is an \emph{essential variable} of $f$. Variables that are non-essential are \emph{fictitious}.
\end{defn}
S.~Kauffman defined canalizing Boolean functions in \cite{kauffman1974large} to capture the general stability of gene regulatory networks. In that paper, a Boolean function $f$ is canalizing in variable $x_i$, with canalizing input $a$ and canalized output $b$, if, whenever $x_i$ takes on the value $a$, the output of $f$ is $b$, regardless of the inputs of other variables. As a consequence, constant functions are trivially canalizing. We will soon see why it is more mathematically natural to exclude these functions, among others. This is done by the following small adjustment to the original definition that does not change the overall idea. 
\begin{defn}\label{defn:canalizing}
A Boolean function $f\colon\F_2^n\to\F_2$ is \emph{canalizing} if there exists a variable $x_i$, a Boolean function $g(x_1,\ldots,x_{i-1},x_{i+1},\ldots,x_n)$, and $a,b\in\F_2$ such that
\[
f(x_1,\ldots,x_n)=\begin{cases}
b &  x_i=a,\\
g\not\equiv b &  x_i\neq a.
\end{cases}
\]
In this case, $x_i$ is a \emph{canalizing variable}, the input $a$ is the \emph{canalizing input}, and the output value $b$ when $x_i=a$ is the corresponding \emph{canalized output}.
\end{defn}
The only difference of our definition is the added restriction that $g$ can not be the constant function $b$. In other words, \emph{we require a canalizing function to be essential in its canalizing variable}. The original definition was motivated by the stability of canalizing functions while our definition tries to capture the dominance of the canalizing variable. At first glance, our additional restriction might seem artificial or insignificant. However, it is unequivocally more natural when considering the algebraic structure of Boolean functions, which is at the heart of the stratification derived in this paper.

In Definition~\ref{defn:canalizing}, when the canalizing variable does not receive its canalizing input $a$, the function $g$ obtained by plugging in $x_i=\overline{a}$ can be an arbitrary Boolean function. To better model a dynamically stable network, in \cite{kauffman2003random} Kauffman proposed that in this case, there should be another variable $x_j$ that is canalizing for a particular input, and so on. This leads to the following definition, where $\sigma$ is a total ordering, or permutation, of $[n]:=\{1,\dots,n\}$. We write this as $\sigma=\sigma(1),\sigma(2),\dots,\sigma(n)$, and say that $\sigma\in\SSS_n$, the symmetric group on $[n]$.
\begin{defn}
A Boolean function $f\colon\F_2^n\to\F_2$ is \emph{nested canalizing} with respect to the permutation $\sigma\in\SSS_n$, inputs $a_i$ and outputs $b_i$, for $i=1,2,\ldots,n$, if it can be represented in the form:
\begin{equation}\label{eqn:ncf}
f(x_1,\ldots,x_n)=\begin{cases}
b_1 &   x_{\sigma(1)}=a_1,\\
b_2 &   x_{\sigma(1)}\neq a_1,x_{\sigma(2)}=a_2,\\
b_3 &   x_{\sigma(1)}\neq a_1,x_{\sigma(2)}\neq a_2,x_{\sigma(3)}=a_3,\\
\vdots & \vdots \\
b_n &   x_{\sigma(1)}\neq a_1,\ldots, x_{\sigma(n-1)}\neq a_{n-1}, x_{\sigma(n)}=a_n,\\
\overline{b_n} &   x_{\sigma(1)}\neq a_1,\ldots, x_{\sigma(n-1)}\neq a_{n-1}, x_{\sigma(n)}\neq a_n.
\end{cases}
\end{equation}
\end{defn}
The idea of nested canalizing is in that some sense, it is ``recursively canalizing'' for exactly $n$ steps. As an analogy, one can consider a nested canalizing function as an onion. We can peel off variables one at a time by not taking the canalizing input of each variable (i.e., by plugging in $x_i=\overline{a_i}$). Before we peel off the `inner' variables, we need to peel off the `outer' variables first. In the end, we are left with the constant function $\overline{b_n}$. We will return to this onion analogy several times throughout this paper to highlight our main ideas.
\begin{rem}\label{rem:essential-ncf}
Since $b_n\neq \overline{b_n}$, a nested canalizing function is essential in all $n$ variables.
\end{rem}
If a Boolean function is nested canalizing, then at least one (of all $n!$) ordering of the variables yields an equation in the form of Eq.~\eqref{eqn:ncf}. Note that such variable orderings are not unique, and the number of such orderings depends on the function $f$. For example, we can write the function $f_1(x,y,z)=xyz$ as in Eq.~\eqref{eqn:ncf} using any of the $6$ orderings of the variables $\{x,y,z\}$. In contrast, for $f_2(x,y,z)=x(yz+1)$, only $2$ orderings would work, namely $(x,y,z)$ and $(x,z,y)$.

\section{$k$-Canalizing Functions}\label{sec:k-canalizing}

Nested canalizing functions have a very restrictive structure and become increasingly sparse as the number of input variables increases \cite{jarrah2007nested}. In a real network model, it is often the case that not all variables exhibit nested canalizing behavior. Moreover, the first several canalizing variables play more central roles than the remaining variables. Thus, it is natural to consider functions that are canalizing, but not nested canalization. For example, one function in the segment polarity gene in by Albert and Othmer's seminal paper \cite{albert2003topology} is canalizing but not nested canalizing. For another example, one can look at the lactose (\emph{lac}) operon, which regulates the transport and metabolism of lactose in \emph{Escherichia coli}. In \cite{robeva2013mathematical}, a simple Boolean network model of the \emph{lac} operon was proposed, where the regulatory function for lactose was
\[
f_L(t+1)=\overline{G_e}\wedge[(L\wedge\overline{E})\vee(L_e\wedge E)]\,.
\]
In a sentence, this means ``internal lactose ($L$) will be present the following timestep if there is no external glucose ($G_e$), \emph{and} at least one of the following holds:
\begin{itemize}
\item there already is internal lactose present, but the enzyme $\beta$-galactosidase ($E$) that breaks it down is absent;
\item there is external lactose $(L_e)$ available and the \emph{lac} permease transporter protein (also represented by $E$ since it is transcribed by the same gene) is present.
\end{itemize}
The variable $\overline{G_e}$ (though sometimes considered a parameter) is canalizing because it acts as a ``shut-down'' switch: if $G_e=1$, then $f_L=0$ regardless of the other variables. In other words, we can write this as 
\[
f_L(G_e,L_e,L,E)
=\begin{cases}
0 &  G_e=1,\\
(L\wedge \overline{E})\vee(L_e\wedge E) &  G_e\neq 0.
\end{cases}
\]
The function $g=(L_e\wedge E)\vee(L\wedge \overline{E})$ is not canalizing, and so the $5$-variable function $f_L$ is canalizing but not nested canalizing. In the framework that we are about to define, this function has canalizing depth $1$. 

Due to both theoretical and practical reasons, a relaxation of the nested canalizing structure is often necessary. This was done in \cite{layne2012nested}, where there authors defined partially nested canalizing functions, and then distinguished between the ``active depth'' and ``full depth'' of a function. Our definition of $k$-canalizing functions is similar to what it means in their paper to be ``partially nested canalizing of active depth at least $k$.'' As before, the small differences are motivated by the desire to have a natural unique algebraic form. 
\begin{defn}\label{defn:k-canalizing}
A Boolean function $f(x_1,\ldots,x_n)$ is \emph{k-canalizing}, where $0\leq k\leq n$, with respect to the permutation $\sigma\in\SSS_n$, inputs $a_i$,
and outputs $b_i$, for $1\leq i \leq k$, if
\begin{equation}\label{eqn:pncf}
f(x_1,\ldots,x_n)=\begin{cases}
b_1 &   x_{\sigma(1)}=a_1,\\
b_2 &   x_{\sigma(1)}\neq a_1,x_{\sigma(2)}=a_2,\\
b_3 &   x_{\sigma(1)}\neq a_1,x_{\sigma(2)}\neq a_2,x_{\sigma(3)}=a_3,\\
\vdots & \vdots \\
b_k &   x_{\sigma(1)}\neq a_1,\ldots, x_{\sigma(k-1)}\neq a_{k-1}, x_{\sigma(k)}=a_k,\\
g\not\equiv b_k &   x_{\sigma(1)}\neq a_1,\ldots, x_{\sigma(k-1)}\neq a_{k-1}, x_{\sigma(k)}\neq a_k.
\end{cases}
\end{equation}
where $g=g(x_{\sigma(k+1)},\ldots,x_{\sigma(n)})$ is a Boolean function on $n-k$ variables. When $g$ is not a canalizing function, the integer $k$ is the \emph{canalizing depth} of $f$. Furthermore, if $g$ is not a constant function, then we call it a \emph{core function} of $f$, denoted by $f_C$. 
\end{defn}
As with canalizing and nested canalizing functions, the $g\not\equiv b_k$ condition ensures that $f$ is essential in the final variable, $x_{\sigma(k)}$. 

\begin{rem}\label{rem:essential-pncf}
Since $g\not\equiv b_k$, a function $f$ that is $k$-canalizing with respect to $\sigma\in\SSS_n$, inputs $a_i$ and outputs $b_i$ is essential in each $x_{\sigma(i)}$ for $i=1,\dots,k$.
\end{rem}

The representation of a $k$-canalizing function $f$ in the form of Eq.~\eqref{eqn:pncf}, even when $k$ is the canalizing depth, is generally not unique since it depends on the variable ordering. However, we will prove that several key properties, such as the canalizing depth and core function $f_C=g$ (if there is one), are independent of representation. It is worth noting that if $g$ is constant, then $g$ need not be unique, i.e., both $g\equiv 0$ and $g\equiv 1$ can arise. This is why we do not allow constant core functions. The following observation is elementary. 

\begin{rem}\label{rem:swap}
 If $f$ is $k$-canalizing with respect to $\sigma\in\SSS_n$, inputs $a_i$ and outputs $b_i$, then any initial segment $x_{\sigma(1)},\dots,x_{\sigma(j)}$ with the same canalized output $b_1=\cdots=b_j$ can be permuted to yield an equivalent form as in Eq.~\eqref{eqn:pncf}. 
\end{rem}

\begin{defn}\label{defn:g-function}
If $f(x_1,\dots,x_n)$ is $k$-canalizing with respect to $\sigma\in\SSS_n$, inputs $a_i$ and outputs $b_i$, then for each $j\leq k$, define the Boolean function $g^\sigma_j(x_{\sigma(j+1)},\dots,x_{\sigma(n)})$ to be the result of plugging in $x_{\sigma(i)}=\overline{a_i}$ for $i=1,\dots,j$.
\end{defn}

In plain English, the function $g^\sigma_j$ is the result of when the first $j$ canalizing variables do \emph{not} get their canalizing inputs. We can now show that the canalizing depth $k$ and the core function $f_C$ are independent of the order of the variables. Moreover, the ambiguity of variable orderings is well-controlled in that they are partitioned into blocks called \emph{layers} via extended monomials, and variables can be permuted arbitrarily if and only if they lie in the same layer. This generalizes the observation in Remark~\ref{rem:swap}.

\begin{prop}\label{prop:well-defined}
  Suppose an $n$-variable Boolean function $f$ is $k$-canalizing with respect to the permutation $\sigma$, inputs $a_i$ and outputs $b_i$, for $1\leq i \leq k$, and $k'$-canalizing with respect to the permutation $\sigma'$, inputs $a'_j$ and outputs $b'_j$, for $1\leq j \leq k'$, such that both $g$ and $g'$, obtained by substituting $\overline{a_i}$ for $x_{\sigma(i)}$ and $\overline{a_j'}$ for $x_{\sigma'(j)}$ respectively, are not canalizing. Then $k=k'$ and the resulting core functions, if they exist, are the same.
\end{prop}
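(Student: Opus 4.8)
The plan is to prove both assertions simultaneously by strong induction on the number of variables $n$, reducing the comparison of the two representations to a single ``first peeling step.'' Everything rests on one elementary observation, which I would record first: if $f$ is canalizing in two \emph{distinct} variables $x_i$ and $x_j$ with respective canalized outputs $b_i$ and $b_j$, then $b_i=b_j$. Indeed, evaluating $f$ at any point with $x_i=a_i$ and $x_j=a_j$ (the two canalizing inputs, which is a legitimate assignment precisely because $i\neq j$) forces the output to equal both $b_i$ and $b_j$. Since the first peeled variable $x_{\sigma(1)}$, resp.\ $x_{\sigma'(1)}$, is a canalizing variable of $f$ with output $b_1$, resp.\ $b'_1$, this already gives $b_1=b'_1=:b$. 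The base case $k=0$ (equivalently, $f$ non-canalizing) is immediate, since then no representation with positive depth can exist and the core is $f$ itself.

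Next I would dispose of the case $\sigma(1)=\sigma'(1)=i$. If the two representations also use the same canalizing input, then $g^\sigma_1=g^{\sigma'}_1$ is a single function on fewer variables that is $(k-1)$-canalizing and $(k'-1)$-canalizing with non-canalizing cores $g$ and $g'$; the inductive hypothesis then yields $k-1=k'-1$ and $g=g'$. If the inputs differed, then $f$ would be identically $b$ on both halves $x_i=a_1$ and $x_i=\overline{a_1}$, contradicting that $f$ is essential in $x_i$ (Remark~\ref{rem:essential-pncf}); so the inputs must agree.

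The heart of the argument is the case $\sigma(1)=i\neq j=\sigma'(1)$, which I would handle by a confluence (``diamond'') argument. Writing $a_1,a'_1$ for the two canalizing inputs, set
\[
h \;=\; f\big|_{x_i=\overline{a_1},\,x_j=\overline{a'_1}},
\]
the function obtained by peeling $x_i$ and $x_j$ in either order. Because $x_j=a'_1$ forces $f=b$, the peel $g_i:=g^\sigma_1=f|_{x_i=\overline{a_1}}$ satisfies $g_i|_{x_j=a'_1}\equiv b$ and $g_i|_{x_j=\overline{a'_1}}=h$, and symmetrically for $g_j:=g^{\sigma'}_1$. I would first verify that $h\not\equiv b$: otherwise $g_i$ would equal $b$ on both halves with respect to $x_j$, forcing $f\equiv b$ and again contradicting essentiality. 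Hence $g_i$ is canalizing in $x_j$ (input $a'_1$, output $b$) with peel $h$, and $g_j$ is canalizing in $x_i$ with peel $h$. Now invoke the inductive hypothesis on the smaller functions $g_i$ and $g_j$: from $\sigma$ the depth of $g_i$ is $k-1$ and its core is $g$, while peeling $x_j$ first gives $\operatorname{depth}(g_i)=1+\operatorname{depth}(h)$ and $\operatorname{core}(g_i)=\operatorname{core}(h)$, so $k-1=1+\operatorname{depth}(h)$ and $g=\operatorname{core}(h)$; the identical computation for $g_j$ via $\sigma'$ gives $k'-1=1+\operatorname{depth}(h)$ and $g'=\operatorname{core}(h)$. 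Comparing yields $k=k'$ and $g=g'$, with the understanding that either both cores fail to exist (when $h$ peels to a constant) or both equal $\operatorname{core}(h)$.

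The step I expect to be the main obstacle is making this ``either order'' confluence fully rigorous in the presence of a degenerate bottom layer: a variable buried in a deeper layer of one representation can resurface as a top-level canalizing variable of $f$ (with a flipped input and the \emph{same} output $b$) once a constant is reached, so the set of canalizing variables of $f$ need not coincide with any single layer. The point to get right is that ruling out $h\equiv b$ via essentiality is exactly what prevents these reassignments from altering the count, and that the bookkeeping for a constant core is absorbed cleanly by phrasing the conclusion as ``cores agree if they exist.'' This local confluence, together with the shared output from the opening lemma, also subsumes and strengthens the permutation freedom recorded in Remark~\ref{rem:swap}.
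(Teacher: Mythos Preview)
Your confluence (``diamond'') argument is correct and is a genuinely different route from the paper's. The paper does not peel both first variables simultaneously; instead it locates $x_{\sigma(1)}$ inside the $\sigma'$-sequence at some position $j^*$, argues (via a small case analysis on $a'_{j^*}$ and $b'_{j^*}$) that all outputs $b'_1,\dots,b'_{j^*}$ equal $b_1$, and then invokes Remark~\ref{rem:swap} to transpose $x_{\sigma'(j^*)}$ to the front, producing a new permutation $\sigma''$ with $\sigma''(1)=\sigma(1)$ and the same tail; the induction is then on $g_1^\sigma=g_1^{\sigma''}$. Your approach avoids tracking where $x_{\sigma(1)}$ sits in $\sigma'$ and never uses Remark~\ref{rem:swap}; it trades that bookkeeping for an appeal to the inductive hypothesis on the common peel $h$, which is conceptually cleaner and closer in spirit to a local-confluence argument. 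The paper's approach, on the other hand, makes the layer structure more visible along the way (the equalities $b'_1=\cdots=b'_{j^*}=b_1$ are exactly the statement that $x_{\sigma(1)}$ lies in the first dominance layer).

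One small gap to patch: your opening lemma requires the two canalizing variables to be \emph{distinct}, so the conclusion $b_1=b'_1$ is only available once you are in the case $\sigma(1)\neq\sigma'(1)$. In the sub-case $\sigma(1)=\sigma'(1)=i$ with $a_1\neq a'_1$, you cannot yet assume $b_1=b'_1$; if in fact $b_1\neq b'_1$, then $f$ is completely determined by $x_i$, both peels $g_1^\sigma$ and $g_1^{\sigma'}$ are constant, and one concludes directly that $k=k'=1$ with no core function (the paper handles exactly this in its parenthetical remark). With that one-line addition your argument is complete.
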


\begin{proof}
Assume $f$ is canalizing, because otherwise, $k=k'=0$ and the result is trivial. Without losing generality we can assume $\sigma(1)\neq \sigma'(1)$, since if this were not the case, we could simply input $\overline{a_1}=\overline{a_1'}$ for $x_{\sigma(1)}=x_{\sigma'(1)}$ and consider $g_1^\sigma=g_1^{\sigma'}$. (Note that if $\sigma(1)=\sigma'(1)$ and $a_1\neq a_1'$, then $b_1\neq b_1'$, which means that $f$ is completely determined by the input to $x_{\sigma(1)}=x_{\sigma'(1)}$. In this case, $f$ has only one essential variable, and so $k=1$. Moreover, both $g_1^\sigma$ and $g_1^{\sigma'}$ are constant functions. Thus $f$ has no core function.)

Since $g$ is non-canalizing, it is not essential in $x_{\sigma(1)}$, and thus $\sigma(1)=\sigma'(j^*)$ for some $1<j^*\leq k'$. We claim that we may assume without loss of generality that $a_{j^*}'=a_1$ and $b_{j^*}'=b_1$. To see why, first suppose that $a_{j^*}'=\overline{a_1}$ and consider the two possible inputs to $x_{\sigma'(j^*)}=x_{\sigma(1)}$ in the function $g_{j^*-1}^{\sigma'}$. If this variable takes its canalizing input $\overline{a_1}$, then the output is $b_{j^*}'$. However, since $f$ is canalizing in $x_{\sigma'(j^*)}=x_{\sigma(1)}$, then the other input $a_1$ would yield the output $b_1$. In other words, $g_{j^*-1}^{\sigma'}$ is completely determined by the input to $x_{\sigma'(j^*)}$, so all subsequent variables are fictitious. Therefore, $g_{j^*}^{\sigma'}=g'$ must be constant, hence $j^*=k'$. Moreover, this function must be $g'\equiv b_1$ because it only arises when $x_{\sigma'(j^*)}=x_{\sigma(1)}$ takes the canalizing input $a_1$. Since $f$ is essential in $x_{\sigma'(j^*)}=x_{\sigma(1)}$, then Remark~\ref{rem:essential-pncf} implies that $b'_{j^*}=\overline{b_1}$, the opposite value of $g'\equiv b_1$. Thus, we have two equivalent ways to represent  $g_{j^*-1}^{\sigma'}=g_{k'-1}^{\sigma'}$:
\begin{equation}\label{eqn:swap}
  g_{k'-1}^{\sigma'}=\begin{cases}
  \overline{b_1} & x_{\sigma'(k')}=\overline{a_1},\\
  g'\equiv b_1  & x_{\sigma'(k')}=a_1.
  \end{cases}
  \quad=\begin{cases}
  b_1 & x_{\sigma'(k')}=a_1,\\
  g'\equiv \overline{b_1}  & x_{\sigma'(k')}=\overline{a_1}.
  \end{cases}
\end{equation}
In other words, switching the triple of values $(a'_{k'},b'_{k'},g')$ from $(\overline{a_1},\overline{b_1},b_1)$ to $(a_1,b_1,\overline{b_1})$ in the original representation of $f$ with respect to $\sigma'\in\SSS_n$ does not change the function, so we may assume that $a'_{j^*}=a_1$ and $b'_{j^*}=b_1$, as claimed. The proof for the case when $b_{j^*}'=\overline{b_1}$ is almost the same.

Since $f$ is canalizing in $x_{\sigma'(j^*)}=x_{\sigma(1)}$ with input $a_1$ and output $b_1$, we must also have $b_j'=b_1$ for all $1\leq j\leq j^*$. By Remark~\ref{rem:swap}, we can create a new permutation $\sigma''$ by swapping the order of $x_{\sigma'(1)}$ and $x_{\sigma'(j^*)}$ in $\sigma'$. Clearly, $f$ is $k'$-canalizing with respect to $\sigma''$ and $g_{k'}^{\sigma'}=g_{k'}^{\sigma''}$. Since $x_{\sigma(1)}=x_{\sigma''(1)}$, the result follows from induction on $g_1^\sigma=g_1^{\sigma''}$. We conclude that $k=k'$. 

Finally, we need to show that when $f$ has a core function $f_C$, it is unique. The non-canalizing functions $g$ and $g'$ are essential in the same set of variables. If they are both constant functions, then they actually need not be the same, due to the different ways to write $g'$ as in Eq.~\eqref{eqn:swap}. Otherwise, they are core functions for $f$, and are obtained by substituting the same set inputs for the same set of variables, thus we must have $f_C=g=g'$.
\end{proof}

It is worth noting that Definition \ref{defn:k-canalizing} is similar to the definition of $k$-partially nested canalizing functions ($k$-PNCFs) in \cite{layne2012nested}. In fact, these two definitions hold the same motivation but  are from different perspectives. In \cite{layne2012nested}, the authors treat $k$-PNCFs as a subclass of Boolean functions. While we prefer to consider canalization as a property of Boolean functions and different functions have different extent of canalization. This provides us a well-defined way to classify all Boolean functions on $n$ variables. 

Returning to our onion analogy, now we can think of all Boolean functions as onions. For each Boolean function, we can try to peel off its variables as we did for nested canalizing functions. We will have to stop once we get to a non-canalizing function. In this sense, nested canalizing functions would be the `best' onions since we can peel off all the variables and non-canalizing would be the `worst'. The $k$-canalizing functions would be those for which one can be peeled off at least $k$ variables. Though a unique core function $f_C=g$ only exists when $g$ is non-constant, we will soon see how every Boolean function, whether or not it has a core function, has a unique \emph{core polynomial} that extends the notion of a core function.

\begin{ex}
The Boolean function $f(x,y,z,w)=xy(z+w)$ has canalizing depth $2$ and core function $f_C=z+w$.
\end{ex}

\begin{rem}
  In our framework, if we consider the set of all Boolean functions on $n$ variables, then:
  \begin{itemize}
  \item The canalizing depth of a $k$-canalizing function is at least $k$.
  \item A non-canalizing function has canalizing depth $0$, and if it is non-constant, then its core function is itself.
  \item Every Boolean function is $0$-canalizing.
  \item The $1$-canalizing functions are precisely the canalizing functions.
  \item The $n$-canalizing functions are precisely the nested canalizing functions.
  \item If a function $f$ has canalizing depth $k$ and the resulting $g$ is constant, then $f$ has $n-k$ fictitious variables, and it is a nested canalizing function on its $k$ essential variables.
  \end{itemize}
\end{rem}

\section{Characterizations of $k$-Canalizing Functions}\label{sec:layers}

\subsection{Polynomial Form of $k$-Canalizing Functions}

It is well-known \cite{lidl1996encyclopedia} that any Boolean function $f$ can be uniquely expressed as a square-free polynomial, called its \emph{algebraic normal form}. Equivalently, the set of Boolean functions on $n$ variables is isomorphic to the quotient ring $R:=\F_2[x_1,\ldots,x_n]/I$, where $I=\langle x_i^2-x_i: 1\leq i \leq n \rangle$. Henceforth in this section, when we speak of Boolean polynomials, we assume they are square-free. Additionally, we will define $\hat{x}_i:=(x_1,\dots,x_{i-1},x_{i+1},\dots,x_n)$ for notational convenience. In this section, we will extend work on NCFs from \cite{li2013boolean} to general $k$-canalizing Boolean functions.

\begin{lem}\label{lem:canalizing}
A Boolean function $f(x_1,\ldots,x_n)$ is canalizing in variable $x_i$, for some $1\leq i\leq n$, with input $a_i$ and output $b_i$, if and only if
\[
f=(x_i+a_i)g(\hat{x}_i)+b_i\,,
\]
for some polynomial $g\not\equiv 0$.
\end{lem}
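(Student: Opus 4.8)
The plan is to prove both directions of the equivalence by working over the quotient ring $R=\F_2[x_1,\dots,x_n]/\langle x_i^2-x_i\rangle$ and translating the case-based Definition~\ref{defn:canalizing} directly into the algebraic normal form. Throughout I will use the fact that over $\F_2$ we have $x_i+a_i=1$ when $x_i\neq a_i$ and $x_i+a_i=0$ when $x_i=a_i$, so the linear factor $(x_i+a_i)$ is exactly the indicator of the event ``$x_i$ does not receive its canalizing input.'' This is the observation that makes the whole equivalence almost immediate once set up correctly.

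For the forward direction, I assume $f$ is canalizing in $x_i$ with input $a_i$ and output $b_i$. By Definition~\ref{defn:canalizing}, $f=b_i$ whenever $x_i=a_i$, and $f=g(\hat{x}_i)\not\equiv b_i$ whenever $x_i\neq a_i$, where $g$ is a function of the remaining variables. I would then verify that the single polynomial $(x_i+a_i)g(\hat{x}_i)+b_i$ reproduces both branches: substituting $x_i=a_i$ kills the first term and leaves $b_i$, while substituting $x_i=\overline{a_i}$ makes the factor equal to $1$ and yields $g(\hat{x}_i)+b_i$. To match the definition's branch $g\not\equiv b_i$, I reconcile notation by writing the function on the non-canalizing branch as $g(\hat{x}_i)+b_i$, so that the polynomial's $g$ is the ANF-level correction; here the essentiality restriction (that the non-canalizing branch is not identically $b_i$) translates precisely into $g\not\equiv 0$. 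By uniqueness of the algebraic normal form, the polynomial expression equals $f$ in $R$.

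For the reverse direction, I start from $f=(x_i+a_i)g(\hat{x}_i)+b_i$ with $g\not\equiv 0$ and read off the two cases by substituting the two possible values of $x_i$, recovering exactly the piecewise form of Definition~\ref{defn:canalizing}. The only point requiring care is showing that the restriction $g\not\equiv 0$ is equivalent to the essentiality condition $g(\hat{x}_i)+b_i\not\equiv b_i$ on the non-canalizing branch; this is immediate since $g(\hat{x}_i)+b_i\equiv b_i$ in $R$ if and only if $g\equiv 0$. I also note that $g$ is genuinely a polynomial in $\hat{x}_i$ alone, which follows because the defining substitution fixes $x_i$ and leaves a function of the other variables.

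The main obstacle, such as it is, is bookkeeping rather than conceptual: I must be careful that the symbol $g$ in the lemma (a polynomial correction term whose nonvanishing encodes essentiality) is aligned with the symbol $g$ in Definition~\ref{defn:canalizing} (the value of $f$ on the branch $x_i\neq a_i$), since they differ by the additive constant $b_i$. Once this identification is made explicit and one invokes the uniqueness of the square-free representative in $R$, both implications close cleanly.
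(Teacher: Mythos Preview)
Your proposal is correct. The paper's forward direction proceeds a little differently: instead of constructing the polynomial by two-point interpolation on $x_i$ and then invoking uniqueness of the algebraic normal form, it starts from the ANF, divides by $x_i$ to write $f = x_i\,q(\hat{x}_i) + r(\hat{x}_i)$, notes that substituting $x_i=a_i$ forces $a_i\,q(\hat{x}_i)+r(\hat{x}_i)\equiv b_i$, and then regroups (using $a_i+a_i=0$) to obtain $f=(x_i+a_i)q(\hat{x}_i)+b_i$ with $g:=q$; essentiality of $x_i$ gives $q\not\equiv 0$. Your interpolation argument is equally elementary and arguably makes the correspondence between the definition's condition $g\not\equiv b_i$ and the lemma's $g\not\equiv 0$ more transparent, since you track the additive shift by $b_i$ explicitly. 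The paper's division argument, on the other hand, identifies $g$ concretely as the $x_i$-quotient of $f$, which is mildly convenient when one later iterates the construction. Both routes handle sufficiency by the same direct substitution.
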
 
\begin{proof}
  Suppose $f$ is canalizing in $x_i$. Write $f$ in its algebraic normal and factor it as 
  \[
  f=x_i\,q(\hat{x}_i)+r(\hat{x}_i)\,,
  \]
  where $q$ and $r$ are the quotient and remainder of $f$ when divided by $x_i$. Note that $b_i=a_iq(\hat{x}_i)+r(\hat{x}_i)$, and since $a_i+a_i=0$ in $\F_2$,
  \[
  f=(x_i+a_i)q(\hat{x}_i)+\big[r(\hat{x}_i)+a_iq(\hat{x}_i)\big]
  =(x_i+a_i)q(\hat{x}_i)+b_i\,.
  \]
  The function $g(\hat{x}_i):=q(\hat{x}_i)$ is nonzero because $f$ is essential in $x_i$. This establishes necessity, and sufficiency is obvious.
\end{proof}
By applying the above lemma recursively, we get the following theorem.
\begin{thm}\label{thm:k-canalizing}
A Boolean function $f(x_1,\ldots,x_n)$ is $k$-canalizing, with respect to permutation $\sigma\in\SSS_n$, inputs $a_i$ and outputs $b_i$, for $1\leq i \leq k$, if and only if it has the polynomial form
\[
f(x_1,\dots,x_n)=(x_{\sigma(1)}+a_1)g(\hat{x}_i)+b_1\,,
\]
where 
\[
g(\hat{x}_i)=(x_{\sigma(2)}+a_2)\Big[\ldots\big[(x_{\sigma(k-1)}+a_{k-1})
[(x_{\sigma(k)}+a_k)\bar{g}+\Delta b_{k-1}]+\Delta b_{k-2}\big]\ldots \Big]+\Delta b_1
\]
for some polynomial $\bar{g}=\bar{g}(x_{\sigma(k+1)},\ldots,x_{\sigma(n)})\not\equiv 0$, where $\Delta b_i:=b_{i+1}-b_i=b_{i+1}+b_i$. $\hfill\Box$
\end{thm}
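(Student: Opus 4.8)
The plan is to prove the statement by induction on $k$, using Lemma~\ref{lem:canalizing} as the engine at each step. Since that lemma is itself an ``if and only if'' and the definition of $k$-canalizing (Definition~\ref{defn:k-canalizing}) is recursive in exactly the right way, each layer of the induction will be a biconditional, so necessity and sufficiency can be handled simultaneously rather than as two separate arguments. The base case $k=1$ is precisely Lemma~\ref{lem:canalizing}: when $k=1$ the nested product in the displayed formula for $g(\hat{x}_i)$ collapses to the single factor $\bar{g}=\bar{g}(x_{\sigma(2)},\dots,x_{\sigma(n)})$, so the claimed form reads $f=(x_{\sigma(1)}+a_1)\bar{g}+b_1$ with $\bar{g}\not\equiv 0$, which is exactly what the lemma gives for canalization in $x_{\sigma(1)}$.

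For the inductive step I would unwind the outermost variable. By Definition~\ref{defn:k-canalizing}, $f$ is $k$-canalizing with respect to $\sigma$, $a_i$, $b_i$ if and only if $f$ is canalizing in $x_{\sigma(1)}$ with input $a_1$ and output $b_1$, and the function $g_1^\sigma$ of Definition~\ref{defn:g-function} is $(k-1)$-canalizing with respect to the remaining data $a_2,\dots,a_k$ and $b_2,\dots,b_k$. Lemma~\ref{lem:canalizing} rewrites the first condition as $f=(x_{\sigma(1)}+a_1)\,g+b_1$ for some $g\not\equiv 0$. The key computation is the substitution $x_{\sigma(1)}=\overline{a_1}$: in $\F_2$ we have $\overline{a_1}+a_1=1$, so plugging in gives $g_1^\sigma=g+b_1$, equivalently $g=g_1^\sigma+b_1$. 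Now apply the inductive hypothesis to $g_1^\sigma$ to write
\[
g_1^\sigma=(x_{\sigma(2)}+a_2)\Big[\ldots\big[(x_{\sigma(k)}+a_k)\bar{g}+\Delta b_{k-1}\big]\ldots\Big]+b_2,
\]
and then add $b_1$. Since $b_2+b_1=\Delta b_1$, the outermost constant turns into exactly $\Delta b_1$, producing the claimed form for $g=g(\hat{x}_i)$. This is the heart of the argument: the additive shift by $b_1$ forced by the substitution is precisely what converts the plain constants $b_{i+1}$ of the lower-order representation into the differences $\Delta b_i=b_{i+1}+b_i$.

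The step I expect to require the most care is the bookkeeping of the nonvanishing conditions, since Lemma~\ref{lem:canalizing} may only be invoked when the relevant cofactor is not identically zero. At the innermost level the hypothesis $\bar{g}\not\equiv 0$ is given, and I would argue inductively that every intermediate bracketed polynomial is nonzero: because $\bar{g}$ does not involve $x_{\sigma(k)}$, the product $(x_{\sigma(k)}+a_k)\bar{g}$ is a genuinely nonconstant polynomial whenever $\bar{g}\not\equiv 0$, and adding the constant $\Delta b_{k-1}$ leaves it nonconstant, hence nonzero; iterating this shows each factor $g,g_1^\sigma,\dots$ is nonzero, which is exactly the essentiality recorded in Remark~\ref{rem:essential-pncf}. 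With this verified, the lemma applies at every layer, the induction closes, and the forward and backward implications follow together.
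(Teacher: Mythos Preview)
Your proposal is correct and follows precisely the approach the paper indicates: the paper's proof is simply the one-line remark ``By applying the above lemma recursively, we get the following theorem,'' and your induction on $k$ using Lemma~\ref{lem:canalizing} at each step is exactly that recursion spelled out in detail. Your extra care with the bookkeeping---the substitution $x_{\sigma(1)}=\overline{a_1}$ yielding $g=g_1^\sigma+b_1$ so that the outer constant becomes $\Delta b_1$, and the inductive check that each bracketed cofactor is nonzero---fills in the computations the paper leaves implicit, but the strategy is identical.
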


\subsection{Dominance Layers of Boolean Functions}

One weakness of Theorem \ref{thm:k-canalizing} is that given a Boolean function $f$, the representation of $f$ into the above form, even when $k$ is exactly the canalizing depth, is not unique. In a $k$-canalizing function, some variables are ``more dominant'' than others. We will classify all variables of a Boolean function into different layers according to the extent of their dominance, extending work from \cite{li2013boolean} from NCFs to general Boolean functions. The ``most dominant'' variables will be precisely those that are canalizing. Recall that we are always working in the quotient ring $R=\F_2[x_1,\dots,x_n]/I$, though at times it is helpful to consider the algebraic normal form of a polynomial as an element of $\F_2[x_1,\dots,x_n]$.
\begin{defn}
A Boolean function $M(x_1,\ldots,x_m)$ is an \emph{extended monomial} in variables $x_1,\ldots,x_m$ if 
\[
M(x_1,\ldots,x_m)=\prod_{i=1}^m (x_i+a_i),
\]
where $a_i \in \F_2$ for each $i=1,\ldots,m$. 
\end{defn}

An extended monomial in $R$ is an extended monomial of a subset of $\{x_1,\dots,x_n\}$. In other words, it is simply a product $\prod_{i=1}^n y_i$, where each $y_i$ is either $x_i$, $\overline{x_i}$, or $1$. Using extended monomials, we can refine Theorem \ref{thm:k-canalizing} to obtain a unique \emph{extended monomial form} of any Boolean function.
\begin{prop}\label{prop:dominance}
Given a Boolean function $f(x_1,\ldots,x_n)$, all variables are canalizing if and only if $f=M(x_1,\ldots,x_n)+b$, where $M$ is an extended monomial in all variables.
\end{prop}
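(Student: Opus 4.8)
The plan is to prove both implications, treating sufficiency by direct substitution and necessity by induction on $n$, peeling off one canalizing variable at a time via Lemma~\ref{lem:canalizing}.

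For sufficiency, suppose $f=M+b$ with $M=\prod_{i=1}^n(x_i+a_i)$. Fixing any $j$, I would factor $f=(x_j+a_j)\big[\prod_{i\neq j}(x_i+a_i)\big]+b$; since the bracketed extended monomial is a nonzero polynomial, Lemma~\ref{lem:canalizing} immediately gives that $x_j$ is canalizing, with input $a_j$ and output $b$. As $j$ is arbitrary, all variables are canalizing, so this direction is routine.

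For necessity, the first step is an \emph{overlap} observation: if $x_i$ and $x_j$ are both canalizing with canalized outputs $b_i$ and $b_j$, then evaluating $f$ at any point with $x_i=a_i$ and $x_j=a_j$ forces $b_i=b_j$. Hence when every variable is canalizing there is a single common canalized output $b$. I would then induct on $n$. Using Lemma~\ref{lem:canalizing}, write $f=(x_1+a_1)\,g(\hat{x}_1)+b$ with $g\not\equiv0$. The crux is to show that the cofactor $g$ again has all of its variables canalizing, and moreover \emph{with canalized output $0$}: fixing $j\geq2$ and setting $x_j$ to its canalizing input makes $f|_{x_j=a_j}$ constant, so $(x_1+a_1)\,g|_{x_j=a_j}$ is constant; evaluating at $x_1=a_1$ shows this constant is $0$, and then evaluating at $x_1=\overline{a_1}$ gives $g|_{x_j=a_j}\equiv0$. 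Since $f$ (and hence $g$) is essential in $x_j$, we get $g|_{x_j=\overline{a_j}}\not\equiv0$, so $x_j$ is canalizing for $g$ with output $0$. Applying the induction hypothesis to $g$ then expresses it as a pure extended monomial $\prod_{i=2}^n(x_i+a_i)$, and substituting back yields $f=\prod_{i=1}^n(x_i+a_i)+b=M+b$.

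The main obstacle is precisely the bookkeeping of the additive constant through the recursion, so I would phrase the inductive statement with the common canalized output as an explicit parameter, namely ``every variable is canalizing with canalized output $b$ if and only if $f=M+b$.'' A naive induction that only preserves ``all variables canalizing'' is insufficient: a cofactor such as $g=x_2x_3+1$ has both of its variables canalizing yet is \emph{not} an extended monomial, and reinserting it would destroy the canalization of the inner variables in $f$. Tracking that the cofactor's common output is forced to equal $0$ is exactly what excludes this case and guarantees that $g$ is a genuine extended monomial, so that $f$ acquires only the single constant $b$.
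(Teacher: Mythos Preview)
Your proposal is correct. The two proofs share the same skeleton---peel off $(x_1+a_1)$ via Lemma~\ref{lem:canalizing}, observe that all canalizing outputs coincide, then induct---but the mechanism for the inductive step differs. The paper works in the polynomial ring: from $f|_{x_2=a_2}\equiv b$ it concludes $(x_2+a_2)\mid(f+b)=(x_1+a_1)g$, and coprimality of $x_1+a_1$ and $x_2+a_2$ forces $(x_2+a_2)\mid g$, so one simply pulls out another linear factor and repeats. You instead work at the level of functions, showing by evaluation that every $x_j$ is canalizing for $g$ with canalized output $0$, and then invoke a strengthened induction hypothesis that tracks the output value. Your explicit warning about the naive hypothesis (illustrated by $g=x_2x_3+1$) is well taken: the paper sidesteps this issue entirely because its divisibility argument never appeals to ``$g$ has all variables canalizing'' as a black box. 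The paper's route is a bit slicker algebraically, while yours is more self-contained and makes the role of the common canalized output more transparent.
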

\begin{proof}
Suppose all $n$ variables are canalizing if $f$, and so $f$ is essential in every variable. Since $x_1$ is canalizing, Lemma~\ref{lem:canalizing} says that $f=(x_1+a_1)g(\hat{x}_1)+b$ for some $a_1,b\in\F_2$, and $g\not\equiv0$. In particular, this means that $(x_1+a_1)\mid (f+b)$ in $\F_2[x_1,\dots,x_n]$. Since $x_2$ is also canalizing, $f(x_1,a_2,\ldots,x_n)\equiv b'$ for some $a_2$ and $b'$. Plugging in $x_1=a_1$ yields $f(a_1,a_2,x_3,\ldots,x_n)\equiv b=b'$, and so
\[
(x_2+a_2)\mid (f+b)=(x_1+a_1)g(x_2,\ldots,x_n)\,. 
\]
Since $x_1+a_1$ and $x_2+a_2$ are co-prime, we get $(x_2+a_2)\mid g(x_2,\ldots,x_n)$. Note that $g(\hat{x}_1)\not\equiv 0$, hence, we have $g(\hat{x}_1)=(x_2+a_2)g'(x_3,\ldots,x_n)$ where $g'(x_3,\ldots,x_n)\not\equiv 0$. Thus we have $f=(x_1+a_1)(x_2+a_2)g'(x_3,\ldots,x_n)+b$. Necessity of the proposition now follows from induction, and sufficiency is obvious.
\end{proof}

We are now ready to prove the main result of this section. This is a generalized version of Theorem 4.2 in \cite{li2013boolean}. We will obtain a new \emph{extended monomial form} of a Boolean function $f$ by induction. In this form, all variables will be classified into different layers according to their dominance. The canalizing variables are the \emph{most dominant} variables. Thus, a Boolean function may have one, none, or many ``most dominant'' variables. As in \cite{li2013boolean}, variables in the same layer will have the same level of dominance, with the variables in the outer layers being ``more dominant'' than those in the inner layers. 

\begin{thm}\label{thm:layers}
Every Boolean function $f(x_1,\ldots,x_n)\not\equiv 0$ can be uniquely written as
\begin{equation}\label{eqn:layers}
f(x_1,\ldots,x_n)=M_1(M_2(\cdots(M_{r-1}(M_rp_C+1)+1)\cdots)+1)+b,
\end{equation}
where each $M_i=\prod_{j=1}^{k_i}(x_{i_j}+a_{i_j})$ is a nonconstant extended monomial, $p_C\not\equiv 0$ is the \emph{core polynomial} of $f$, and $k=\sum k_i$ is the canalizing depth. Each $x_i$ appears in exactly one of $\{M_1,\dots,M_r,p_C\}$, and the only restrictions on Eq.~\eqref{eqn:layers} are the following ``exceptional cases'':
\begin{enumerate}[(i)]
\item If $p_C\equiv 1$ and $r\neq 1$, then $k_r\geq 2$;
\item If $p_C\equiv 1$ and $r=1$ and $k_1=1$, then $b=0$;
\end{enumerate}
When $f$ is a non-canalizing function, we simply have $p_C=f$.
\end{thm}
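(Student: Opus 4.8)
The plan is to prove existence and uniqueness separately, with existence proceeding by induction on the canalizing depth $k$ using the recursive peeling already encoded in Lemma~\ref{lem:canalizing} and Theorem~\ref{thm:k-canalizing}. First I would handle the base case: if $f$ is non-canalizing, then $k=0$, there are no layers, and we simply set $p_C=f$, which is the final sentence of the statement. For the inductive step, suppose $f$ is canalizing. By Lemma~\ref{lem:canalizing} we may write $f=(x_{i}+a_{i})g+b$ where $g\not\equiv 0$. Iterating Proposition~\ref{prop:dominance}'s argument, I would collect \emph{all} variables in which $f$ is canalizing with the same canalized output into a single extended monomial $M_1=\prod_{j}(x_{i_j}+a_{i_j})$, so that $f=M_1\cdot h+b$ where $h\not\equiv 0$ is a Boolean function on the remaining variables in which none of the $M_1$-variables is essential. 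The key point making $M_1$ a genuine ``layer'' is that a variable $x_\ell$ lies in $M_1$ precisely when it is canalizing in $f$ \emph{and} plugging in its canalizing input yields the same output $b$ as the others; I would verify that $h+1$ is not canalizing in any of the $M_1$-variables, which forces the layer boundary and gives the nested form $M_1(M_2(\cdots)+1)+b$ after applying the inductive hypothesis to $h$ (equivalently, to the non-constant Boolean function $h+1$, whose own most-dominant layer becomes $M_2$). Writing $h = M_2(\cdots)+1$, substitution produces exactly Eq.~\eqref{eqn:layers}, and $k=\sum k_i$ records the total canalizing depth by Proposition~\ref{prop:well-defined}.

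Next I would establish the uniqueness of the decomposition, which is where the exceptional cases (i) and (ii) enter and which I expect to be the main obstacle. The idea is that the outermost layer $M_1$, together with the output $b$, is determined intrinsically by $f$: the set of variables in $M_1$ is the set of most-dominant (canalizing) variables sharing a common canalized output, and $b$ is that output. Once $M_1$ and $b$ are pinned down, one recovers $h=(f+b)/M_1$ uniquely as a polynomial (division is well-defined since the $(x_{i_j}+a_{i_j})$ are coprime and square-free), and uniqueness of the remaining layers follows by induction applied to $h$. The delicate part is that this ``intrinsic'' reading of $M_1$ and $b$ can fail in exactly two degenerate situations, both arising when the core polynomial is trivial, $p_C\equiv 1$: a single-variable innermost layer can be absorbed or its output flipped, as already witnessed by the two equivalent representations in Eq.~\eqref{eqn:swap} inside the proof of Proposition~\ref{prop:well-defined}. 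I would argue that imposing (i) $k_r\geq 2$ when $p_C\equiv 1$ and $r\neq 1$, and (ii) $b=0$ when $p_C\equiv 1$, $r=1$, $k_1=1$, removes precisely these ambiguities by forbidding the alternative representation, so that within the constrained family the decomposition is forced.

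Concretely, the hard case to rule out is when the innermost assembled expression $M_r p_C+1$ with $p_C\equiv 1$ and $k_r=1$ reduces to a single extended factor $(x_\ell+a_\ell)$, because then that factor could be merged into $M_{r-1}$, changing both $r$ and the parity of the surrounding $+1$'s. Restriction (i) outlaws exactly this merge for $r\neq 1$, while restriction (ii) fixes the residual freedom in the choice of $b$ when the whole function collapses to a single extended monomial (the $r=1$, $k_1=1$, $p_C\equiv 1$ case, i.e.\ $f=(x_\ell+a_\ell)+b$). I would complete the uniqueness argument by checking that any two decompositions satisfying (i)--(ii) must agree on $M_1$ and $b$ by the dominance characterization, then invoke the inductive hypothesis; the existence direction can always be normalized into (i)--(ii) by applying the swap of Eq.~\eqref{eqn:swap} whenever a forbidden single-variable $p_C\equiv 1$ layer appears. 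The main obstacle throughout is bookkeeping the interaction between the constant core polynomial $p_C\equiv 1$ and the $+1$ ``carries'' between layers, so I would track carefully how peeling one layer toggles the canalized output of the next, exactly as $\Delta b_i$ does in Theorem~\ref{thm:k-canalizing}.
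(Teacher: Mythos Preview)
Your proposal is correct and follows essentially the same route as the paper: peel off the outermost extended monomial $M_1$ consisting of all canalizing variables (via Proposition~\ref{prop:dominance}), observe that the inner constant must be $1$ because otherwise the next layer's variables would already be canalizing in $f$, and recurse. The only structural difference is that you induct on the canalizing depth $k$ while the paper inducts on the number of essential variables $n$; both terminate since peeling $M_1$ strictly decreases each quantity. Your treatment of uniqueness is in fact more explicit than the paper's, which handles the exceptional cases in the discussion preceding the proof and then leans on the inductive hypothesis; your identification of $M_1$ and $b$ as intrinsic invariants of $f$, together with the observation that restrictions (i)--(ii) kill precisely the swap ambiguity of Eq.~\eqref{eqn:swap}, is the right way to close that gap.
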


Before we prove Theorem~\ref{thm:layers}, we will define some terms and 
 examine a few details, such as the subtle difference between the core function and core polynomial, and the ``exceptional cases'', by simple examples. This should help elucidate the more technical parts of the proof.

\begin{defn}
A Boolean function $f$ written in its unique form from Eq.~\eqref{eqn:layers} is said to be in \emph{standard monomial form}, and $r$ is its \emph{layer number}. The $i^{\rm th}$ \emph{dominance layer} of $f$, denoted $L_i$, is the set of essential variables of $M_i$. The set of essential variables of $p_C$ is
denoted $L_\infty$, and these are called the \emph{recessive variables} of $f$.
\end{defn}

As we will see, when $f$ has a core function $f_C$, its core polynomial is either $p_C=f_C$ or $p_C=f_C+1$. When the number of ``$+1$''s that appear in Eq.~\eqref{eqn:layers}, possibly including $b$, is even, we have $p_C=f_C$. Otherwise, we have $p_C=f_C+1$. When a Boolean function $f$ with canalizing depth $k>0$ fails to have a core function, i.e., the remaining function is either $g\equiv 0$ or $g\equiv 1$, then $f$ is in fact a nested canalizing function on $k$ variables, and its core polynomial is simply $p_C=1$.

Finally, we will examine the two ``exceptional cases''. Both of these are necessary to avoid double-counting certain functions and ensure uniqueness, as claimed in Theorem~\ref{thm:layers}.

\begin{enumerate}[(i)]
\item \emph{If $p_C\equiv 1$ and $r\neq 1$}. In this case, if $k_r=1$, that is $M_r=x_i$ or $\overline{x_i}$, for some $i$. In either case, this innermost layer can be ``absorbed'' into the extended monomial $M_{r-1}$. For example, if $M_r=x_i$, then the inner two layers are
\[
M_{r-1}(M_r+1)+1=M_{r-1}(x_i+1)+1=(x_i+1)\prod_{j=1}^{k_{r-1}}(x_{i_j}+a_{i_j})+1,
=\hat{M}_{r-1}+1\,,
\]
where $\hat{M}_{r-1}=\overline{x_i}M_{r-1}$ is an extended monomial. Thus, in this case we may assume that the innermost layer has at least two essential variables, hence $k_r\geq 2$.

\item If $p_C\equiv 1$ and $r=1$ and $k_1=1$, then for some $i$, either $f=x_i+b$, or $f=\overline{x_i}+b$. Clearly, there are only two such functions, either $f=x_i$ or $f=\overline{x_i}$, and so allowing both $b=0$ and $b=1$ would double-count these. Thus, we may assume that $b=0$. 
\end{enumerate}

\begin{proof}[Proof of Theorem~\ref{thm:layers}]
For any non-canalizing function $f\not\equiv 0$, $f=p_C$ and the uniqueness is obvious.

When $f$ is canalizing, we induct on $n$. When $n=1$, there are $2$ canalizing functions, namely $x=(x)1$ and $x+1=(x+1)1$, both satisfying Eq.~\eqref{eqn:layers}. For the these $2$ functions, since $p_C\equiv1$, $r=1$ and $k_1=1$, we must have $b=0$, so the previous representation is also unique. 

When $n=2$, there are $12$ canalizing functions, $4$ of which are essential in $1$ variable, and thus can be uniquely written as in Eq.~\eqref{eqn:layers}. Now let us consider the $8$ canalizing functions that are essential in $2$ variables. It is easy to check for all these, both variables $x_1$ and $x_2$ are canalizing. Then by Proposition~\ref{prop:dominance}, all of them are of the form
\[
(x_1+a_1)(x_2+a_2)+b=M_1p_C+b\,,
\]
where $M_1=(x_1+a_1)(x_2+a_2)$ and $p_C\equiv1$. In this case, we have $r=1$ and $k_1=2$. Note that when $p_C\equiv1$, the innermost layer must have at least two essential variables, so uniqueness holds. We have proved that Eq.~\eqref{eqn:layers} holds for $n=1$ and $n=2$.

Assume now that Eq.~\eqref{eqn:layers} is true for any canalizing function that is essential in at most $n-1$ variables. Consider a canalizing function $f(x_1,\ldots,x_n)$. Suppose that $x_{1_j}$ for each $j=1,\ldots,k_1$ are all canalizing
in $f$. With the same argument as in Proposition \ref{prop:dominance}, we get $f=M_1g+b$, where $M_1=(x_{1_1}+a_{1_1})\cdots(x_{1_{k_1}}+a_{1_{k_1}})$ and $g\not\equiv 0$. If $g$ is non-canalizing, then Eq.~\eqref{eqn:layers} holds with $p_C=g$ and $r=1$. If $g$ is canalizing, then it is a canalizing function that is essential in at most $n-k_1<n-1$ variables. By our induction hypothesis, it can be uniquely written as
\[
g=M_2(M_3(\cdots(M_{r-1}(M_rp_C+1)+1)\cdots)+1)+b'\,.
\]
Note that $b'$ must be $1$, otherwise all variables in $M_2$ will also be most dominant variables of $f$. This completes the proof.
\end{proof}

\begin{rem}
For any Boolean function $f$:
\begin{enumerate}[(i)]
  \item Variables in two consecutive layers have different canalized outputs.
  \item $L_1$ consists of all the most dominant variables (canalizing variables) of $f$.
\end{enumerate}
\end{rem}

Let us return to our onion analogy, where previously we were peeling off one variable at a time. Furthermore, imagine that each individual variable layer is white if the canalized output $a_i=0$, and black if $a_i=1$. Thus, we can think of an extended monomial layer $L_i$ as a maximal block of variable layers of the same color. We can ``peel off'' an entire $L_i$ at once by plugging in the non-canalizing input $x_{i_j}=\overline{a_{i_j}}$ for each variable in $L_i$. In other words, we can peel off all black layers, then all white layers, then all black layers, and so on. Moreover, we can read off the colors directly off of the function if it is written in the form of Eq.~\eqref{eqn:pncf}. However, recall that this form of a $k$-canalizing function, where $g$ is non-canalizing, is not unique. By Theorem \ref{thm:layers}, the order of consecutive variables, $x_{\sigma(i)}$ and $x_{\sigma(i+1)}$, can be transposed if and only if they are in the same $L_j$. Based on this property, we can enumerate Boolean functions on $n$ variables with canalizing depth $k$. Roughly speaking, we will do this by counting the number of different layer structures, and then counting the number of (non-canalizing) core polynomials. This last set is just the complement of the set of canalizing functions on those variables, which were enumerated in \cite{just2004number}.

\begin{ex}
The Boolean function $f(x_1,\dots,x_7)=x_1\overline{x_2}(x_3x_4(x_5+x_6+x_7+1)+1)$ has canalizing depth $4$. With respect to the permutation $\sigma=1,2,3,4$, its canalizing inputs are $(a_i)_{i=1}^4=(0,1,0,0)$, outputs $(b_i)_{i=1}^4=(0,0,1,1)$ and the core polynomial is $p_C=x_5+x_6+x_7$.
\end{ex}

\section{Enumeration of $k$-canalizing functions}\label{sec:enumeration}

Let $B(n,k)$ be the number of Boolean functions on $n$ variables with canalizing depth exactly $k$. Exact formulas are known for $B(n,k)$ in a few special cases. The number of nested canalizing functions is $B(n,n)$. A recurrence for this was independently derived in the 1970s by engineers studying unate cascade functions  \cite{bender1978asymptotic,sasao1979number}, and then a closed formula was found by mathematicians studying NCFs \cite{li2013boolean}. The quantity $B(4,k)$ was recently computed in \cite{ray2014analysis}. In this section, we will present a general formula for $B(n,k)$. 

Theorem \ref{thm:layers} indicates that we can construct a Boolean functions with canalizing depth $k$ by adding extended monomial layers to a non-canalizing function on $n-k$ variables. Moreover, the complement of the set of non-canalizing functions are the canalizing functions. Hence, let us begin with a formula for $C_n$, the number of canalizing functions on $n$ variables. This result was derived in \cite{just2004number} using a probabilistic method. We will include an alternative combinatorial proof using the \emph{truth table} of a Boolean function $f$. This is the length-$2^n$ vector $(f(x_i))_i$, given some fixed ordering of the elements of $\F_2^n$.
\begin{lem}
The number $C_n$ of canalizing Boolean functions on $n\geq 0$ variables is
\[
C_n=2((-1)^n-n-1)+\sum_{k=1}^n (-1)^{k+1}{n \choose k}2^{k+1}2^{2^{n-k}}.
\]
\end{lem}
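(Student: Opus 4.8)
The plan is to work entirely with truth tables, viewing a Boolean function as a $\{0,1\}$-labelling of the $2^n$ points of $\F_2^n$, and to count by inclusion--exclusion over the half-cubes on which $f$ is forced to be constant. The first step is a clean reformulation of Definition~\ref{defn:canalizing}: a function $f$ is canalizing if and only if $f$ is non-constant and there is a triple $(i,a,b)$ for which the restriction of $f$ to the half-cube $\{x_i=a\}$ is identically $b$. Indeed, once $f\equiv b$ on $\{x_i=a\}$, the essentiality clause $g\not\equiv b$ is equivalent to $f\not\equiv b$, i.e.\ to $f$ being non-constant. Writing $U$ for the set of functions that are constant on at least one half-cube, the two constant functions are then the only non-canalizing members of $U$, so $C_n=|U|-2$, and the whole problem reduces to computing $|U|$.

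Next I would compute $|U|$ by inclusion--exclusion. For a fixed output $b\in\F_2$, let $G_b$ be the set of functions that are $\equiv b$ on some half-cube, so that $U=G_0\cup G_1$. Applying inclusion--exclusion to the $2n$ events indexed by pairs $(i,a)$, a typical term counts the functions that are $\equiv b$ on a prescribed union $\bigcup_{(i,a)\in P}\{x_i=a\}$; such functions are free exactly on the complementary subcube, so there are $2^{|\mathrm{comp}(P)|}$ of them. The key structural observation is that $\mathrm{comp}(P)$ is empty as soon as $P$ contains both inputs $(i,0)$ and $(i,1)$ of some variable, and is otherwise a subcube of dimension $n-m$ when $P$ singles out $m$ distinct variables. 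Separating the \emph{simple} subsets (at most one input per variable) from the \emph{conflicted} ones, the simple subsets contribute $\mathrm{Simp}:=\sum_{m=1}^n(-1)^{m+1}\binom{n}{m}2^m 2^{2^{n-m}}$ to each $|G_b|$; summing over the two choices of $b$ reproduces precisely the main sum in the statement, the factor $2^{k+1}=2\cdot 2^k$ recording the two outputs and the $2^k$ inputs.

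The remaining contributions are exactly what the correction term $2((-1)^n-n-1)$ must absorb. First, the conflicted subsets each force $f$ to be the single constant $b$, so their signed total is an alternating sum of $1$'s; I would evaluate it via the product identity obtained by letting each variable range over its four states (absent, input $0$, input $1$, or both): the signed total over \emph{all} subsets vanishes while the simple subsets alone sum to $(-1)^n$, and tracking these through the inclusion--exclusion shows the conflicted subsets contribute $(-1)^n$ to each $|G_b|$, so that $|G_b|=\mathrm{Simp}+(-1)^n$. Second, since $U=G_0\cup G_1$ is not a disjoint union, I must subtract $|G_0\cap G_1|$: a function in both is $\equiv b$ on one half-cube and $\equiv\overline{b}$ on another, which for $n\ge 1$ forces these half-cubes to come from the same variable, so $f\in\{x_i,\overline{x_i}\}$ and $|G_0\cap G_1|=2n$. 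Collecting terms gives $C_n=|G_0|+|G_1|-|G_0\cap G_1|-2=2\,\mathrm{Simp}+2(-1)^n-2n-2$, which is the claimed formula.

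The step I expect to be most delicate is the bookkeeping of the conflicted inclusion--exclusion terms: one must be careful that a subset forcing $f$ constant on all of $\F_2^n$ contributes $2^0=1$ rather than being discarded, and that the evaluation of their alternating sum is organized over variables (four states each) rather than over pairs. I would also verify the degenerate cases $n=0$ and $n=1$ directly, since the emptiness arguments for $\mathrm{comp}(P)$ and for $G_0\cap G_1$ tacitly assume enough points are available; both are readily checked to agree with the formula, giving $C_0=0$ and $C_1=2$.
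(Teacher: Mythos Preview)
Your proof is correct. Both you and the paper argue via truth tables and inclusion--exclusion, but the bookkeeping is organised differently. The paper fixes a set of $k$ variables together with one canalizing input each and a single common output $b$, so it only ever ranges over what you call \emph{simple} subsets; non-constancy is built into each term by writing $2^{2^{n-k}}-1$ for the free part, and the overcount of the $2n$ functions $x_i,\overline{x_i}$ (which are canalizing with both outputs) is removed by an ad hoc subtraction of $2n$ at the end. You instead run the full inclusion--exclusion over all $2n$ half-cube events for each output $b$, separate the conflicted subsets and evaluate their alternating sum as $(-1)^n$ via the four-state product identity, and then assemble $C_n=|G_0|+|G_1|-|G_0\cap G_1|-2$ with $|G_0\cap G_1|=2n$. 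Algebraically the two routes coincide: expanding the paper's $-1$ term gives $-\sum_{k\ge 1}(-1)^{k+1}\binom{n}{k}2^{k+1}=2(-1)^n-2$, which together with the $-2n$ correction reproduces exactly your $2(-1)^n-2n-2$. Your organisation makes the inclusion--exclusion structure explicit (in particular, why only simple subsets carry the $2^{2^{n-k}}$ weight), at the cost of the extra conflicted-subset computation; the paper's version is shorter but leaves the reader to see why restricting to one input per variable and tacking on $-1$ and $-2n$ is legitimate.
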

\begin{proof}
We wish to count the number of Boolean functions that are canalizing in at least $1$ variable. We can construct a truth table of a Boolean function that is canalizing in at least $k$ variables by doing the following. First, pick $k$ variables to be canalizing; there are ${n \choose k}$ ways to do this. Next, pick the canalizing input for each canalizing variable; there are $2^k$ ways to do that. Then, fill out the entries in the truth table of these canalizing inputs with the same canalized output; there are $2$ ways to do that. The remaining table has $2^{n-k}$ entries, so there are $2^{2^{n-k}}-1$ ways to fill it out such that the corresponding function is non-constant. By inclusion-exclusion, we have $\sum_{k=1}^n (-1)^{k+1}{n \choose k}2^{k+1}(2^{2^{n-k}}-1)$. Note that in this process, there are $2n$ functions of the form $x_i+a_i$, each being counted exactly twice, since we can pick either input as canalizing input. Therefore we have 
\[
C_n=\sum_{k=1}^n (-1)^{k+1}{n \choose k}2^{k+1}(2^{2^{n-k}}-1)-2n=2((-1)^n-n-1)+\sum_{k=1}^n (-1)^{k+1}{n \choose k}2^{k+1}2^{2^{n-k}}.
\]
\end{proof}
As examples, one can check that $C_n=0,2,12,118,3512,\dots$ for $n=0,1,2,3,4,\dots$. This is consistent with the results in \cite{just2004number}, though it should be noted that all numbers differ by $2$ because we do not consider the constant functions to be canalizing. 

Recall that there are $2^{2^n}$ Boolean functions on $n$ variables. Since the non-canalizing functions are the complement of the set of canalizing functions, the following is immediate. 
\begin{cor}
The number $B^*(n,0)$ of non-constant core polynomials on $n$ variables is
\[
B^*(n,0)=B(n,0)-2=(2^{2^n}-C_n)-2=2^{2^{n}}-2((-1)^n-n)+\sum_{k=1}^n (-1)^{k}\binom{n}{k}2^{k+1}2^{2^{n-k}}.
\]
\end{cor}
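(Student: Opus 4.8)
The plan is to reduce the count of non-constant core polynomials to a count of non-canalizing functions, and then substitute the closed form for $C_n$ established in the preceding lemma. The first step is the one piece of genuine content; everything afterward is bookkeeping.

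First I would identify the non-constant core polynomials on $n$ variables with the non-constant non-canalizing functions on $n$ variables. By Theorem~\ref{thm:layers}, when $f$ is non-canalizing we have $p_C=f$; conversely, any non-canalizing $f\not\equiv 0$ is trivially its own core polynomial (take layer number $r$ with no outer extended monomials). Thus a Boolean function is a core polynomial precisely when it is non-canalizing and not identically $0$. Among the $B(n,0)$ non-canalizing functions, the two that are excluded when we further demand non-constancy are exactly the constants $0$ and $1$ (both of which are non-canalizing under this paper's convention). Hence $B^*(n,0)=B(n,0)-2$.

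Next I would compute $B(n,0)$ by complementation. There are $2^{2^n}$ Boolean functions on $n$ variables in total, and exactly $C_n$ of them are canalizing, so $B(n,0)=2^{2^n}-C_n$. Combining with the previous step gives $B^*(n,0)=2^{2^n}-C_n-2$. Finally I would substitute
\[
C_n=2((-1)^n-n-1)+\sum_{k=1}^n(-1)^{k+1}\binom{n}{k}2^{k+1}2^{2^{n-k}}
\]
and simplify: the constant terms collapse via $-2-2((-1)^n-n-1)=-2((-1)^n-n)$, while negating the sum turns each $(-1)^{k+1}$ into $(-1)^k$, yielding the claimed expression.

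The only real obstacle is conceptual rather than computational: justifying the identification in the second paragraph, namely that non-constant core polynomials are in bijection with non-constant non-canalizing functions. This rests squarely on the uniqueness clause of Theorem~\ref{thm:layers} together with the fact that a non-canalizing function is its own core polynomial. Once that bijection is in place, the remaining sign and constant bookkeeping in the final substitution is entirely routine.
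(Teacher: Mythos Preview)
Your proposal is correct and mirrors the paper's own reasoning: the paper simply declares the corollary ``immediate'' from the fact that the non-canalizing functions are the complement of the canalizing ones, and you have spelled out exactly that complementation together with the routine substitution of the closed form for $C_n$. The only mild looseness is the phrase ``take layer number $r$ with no outer extended monomials,'' which should really be read as $r=0$; once that is understood, your bijection and the subsequent arithmetic are exactly what the paper has in mind.
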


One can check that $B^*(n,0)=0,0,2,136,62022,\dots$, for $n=0,1,2,3,4,\dots$.

Before we derive the general formula for $B(n,k)$, let us first look at the special case when $k=n$. This was computed in \cite{li2013boolean}, but we include a self-contained proof. Recall that a \emph{composition of $n$} is a sequence $k_1,\dots,k_r$ of non-empty integers such that $k_1+\cdots+k_r=n$. By Theorem~\ref{thm:layers}, the standard extended monomial form of a Boolean function with canalizing depth $k$ involves a size-$r$ composition of $k$ with the additional property that $k_r\geq 2$. 
 
\begin{lem}\label{li2013}
For $n\geq 2$, the number $B(n,n)$ of nested canalizing functions on $n$ variables is given by:
\begin{equation}\label{eqn:B(n,n)}
B(n,n)=2^{n+1} \sum_{\!\!r=1}^{n-1}\sum_{\substack{k_1+\ldots+k_r=n \\ k_i\geq 1,\;k_r\geq 2}} {n \choose k_1,\ldots,k_r}\,,
\end{equation}
where ${n \choose k_1,\ldots,k_r}=\frac{n!}{k_1!k_2!\ldots k_r!}$. 
\end{lem}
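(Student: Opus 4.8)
The plan is to count nested canalizing functions directly from the standard monomial form of Theorem~\ref{thm:layers}, since an NCF is exactly a function with canalizing depth $k=n$. When $k=n$, all $n$ variables are distributed among the layers $M_1,\dots,M_r$, so $p_C\equiv 1$ and the layer sizes $k_1,\dots,k_r$ form a composition of $n$. By exceptional case~(i), whenever $r\neq 1$ the innermost layer must satisfy $k_r\geq 2$; this is precisely the summation constraint appearing in Eq.~\eqref{eqn:B(n,n)}. Thus the strategy is: fix a layer number $r$ and a valid composition $(k_1,\dots,k_r)$ of $n$, count the number of distinct NCFs giving rise to that layer structure, and sum.

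First I would fix a composition $k_1+\cdots+k_r=n$ with $k_r\geq 2$ and count the data needed to specify the function. The multinomial coefficient $\binom{n}{k_1,\ldots,k_r}$ counts the ways to partition the $n$ variables into ordered blocks $L_1,\dots,L_r$ of the prescribed sizes; by Theorem~\ref{thm:layers} the order \emph{within} each block is irrelevant but the order \emph{between} blocks matters, so this is the correct count of layer assignments. Next, each variable $x_{i_j}$ carries a sign $a_{i_j}\in\F_2$ determining whether the factor is $x_{i_j}$ or $\overline{x_{i_j}}$, contributing $2^n$ choices, and the global output bit $b$ contributes a factor of $2$, giving the prefactor $2^{n+1}$. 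I would verify that Eq.~\eqref{eqn:layers} with $p_C\equiv 1$ has no further free parameters, since the pattern of $+1$'s between layers is completely determined by $r$.

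The main obstacle will be justifying that this count is exact, i.e.\ that distinct choices of (composition, variable partition, signs, $b$) yield genuinely distinct functions with \emph{no} overcounting, which is exactly what the uniqueness clause of Theorem~\ref{thm:layers} guarantees, provided I correctly incorporate both exceptional cases. Exceptional case~(i) is already built into the constraint $k_r\geq 2$ for $r\neq 1$. The subtle point is case~(ii): when $r=1$ and $k_1=1$ we are forced to set $b=0$, but this configuration has $k_1=1<2$ with $r=1$ and represents an NCF on a single essential variable — for $n\geq 2$ this degenerate case does not occur, so the constraint $k_r\geq 2$ (active since $r=1$ forces $k_1=n\geq 2$, or $r\geq 2$ forces the stated bound) cleanly excludes it. I would confirm that the outer sum runs only to $r=n-1$: since $k_r\geq 2$, the maximum number of layers is achieved by $(1,1,\dots,1,2)$, which has $r=n-1$ parts, so $r=n$ is impossible and the upper limit is correct.

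Finally, I would assemble the pieces: summing the prefactor $2^{n+1}$ times $\sum_r \sum_{(k_i)} \binom{n}{k_1,\ldots,k_r}$ over all valid compositions yields Eq.~\eqref{eqn:B(n,n)}. I expect the only genuinely delicate step to be the uniqueness bookkeeping around the exceptional cases; the enumeration itself reduces to a routine multinomial-over-compositions argument once the bijection between NCFs and valid $(r,(k_i),\text{signs},b)$ data is established via Theorem~\ref{thm:layers}.
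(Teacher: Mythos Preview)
Your proposal is correct and follows essentially the same approach as the paper: both count NCFs via the standard monomial form of Theorem~\ref{thm:layers}, selecting a composition $(k_1,\dots,k_r)$ of $n$ with $k_r\ge 2$, distributing variables into layers via the multinomial coefficient, and multiplying by $2^{n+1}$ for the sign choices and the bit $b$. Your write-up is in fact more careful than the paper's in explicitly verifying that the exceptional cases of Theorem~\ref{thm:layers} are handled and that the upper summation limit $r=n-1$ is forced by $k_r\ge 2$.
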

\begin{proof}
If a Boolean function is nested canalizing on $n$ variables, then by Theorem \ref{thm:layers}, we know its core polynomial must be $p_C=1$. Let us first fix the layer number $r$. Then for each choice of $k_1,\ldots,k_r$, with $k_1+\ldots+k_r=n$, $k_i\geq 1$ and $k_r\geq 2$, there are ${n\choose k_1,\ldots,k_r}$ different ways to assign $n$ variables to these $r$ layers. For each variable $x_j$, we can pick either $x_j$ or $x_j+1$ to be in its corresponding extended monomial. Note that we also have $2$ choices for $b$. So the number of nested canalizing functions on $n$ variables with exactly $r$ layers is given by:
\[
2^{n+1}\sum_{\substack{k_1+\ldots+k_r=n\\k_i\geq 1,\;k_r\geq 2}} {n \choose k_1,\ldots,k_r}\,.
\]
Then by summing over all possible layer numbers $r$, for $1\leq r\leq n-1$, we get the formula in Eq.~\eqref{eqn:B(n,n)} for $B(n,n)$.
\end{proof}
According to our definition, $B(1,1)=2$. One also can check that $B(2,2)=8$, $B(3,3)=64$, $B(4,4)=736$, $\ldots$.

Now we are ready to derive the general formula for $B(n,k)$.
\begin{thm}
The number $B(n,k)$ of Boolean functions on $n$ variables with canalizing depth $k$, for $1\leq k\leq n$, is
\[
B(n,k)={n\choose k}\left[B(k,k)+B^*(n-k,0)\cdot 2^{k+1}\sum\binom{k}{k_1,\ldots,k_r}\right],
\]
where the sum is taken over all compositions of $k$, and the closed from of $B(k,k)$ is given by Lemma~\ref{li2013}.
\end{thm}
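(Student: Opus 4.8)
The plan is to leverage the uniqueness in Theorem~\ref{thm:layers} to turn the enumeration into a bijective count of combinatorial data. Every function $f$ of canalizing depth exactly $k$ has a unique standard monomial form, and I would organize the count around its core polynomial $p_C$, splitting into the cases $p_C\equiv 1$ and $p_C$ non-constant. Since $p_C$ is uniquely determined by $f$, these two cases partition the functions being counted, and within each case the remaining data (which variables lie in the layers, the composition $k_1,\dots,k_r$, the assignment of variables to layers, the signs $a_{i_j}$, and $b$) is forced by Theorem~\ref{thm:layers}. Hence it suffices to count the admissible tuples.

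First I would handle the constant case $p_C\equiv 1$. As noted in Section~\ref{sec:k-canalizing} (and evident from Theorem~\ref{thm:layers}), such an $f$ is a nested canalizing function on its $k$ essential variables, with the remaining $n-k$ variables fictitious. Choosing the $k$ essential variables contributes $\binom{n}{k}$, and the number of nested canalizing functions on those variables is $B(k,k)$ by Lemma~\ref{li2013}; note that all exceptional-case restrictions (such as $k_r\geq 2$) are already incorporated into $B(k,k)$. This case therefore contributes $\binom{n}{k}B(k,k)$.

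Next I would handle the non-constant case. The crucial point is that exceptional cases (i) and (ii) both require $p_C\equiv 1$, so here they are inactive: there is no constraint $k_r\geq 2$, and $b$ ranges freely over $\F_2$. I would count by choosing the $k$ layer variables in $\binom{n}{k}$ ways; for each composition $k_1+\cdots+k_r=k$, assigning the layer variables to the $r$ ordered layers in $\binom{k}{k_1,\dots,k_r}$ ways; choosing the signs $a_{i_j}\in\F_2$ in $2^k$ ways; choosing $b\in\F_2$ in $2$ ways; and choosing the core polynomial as any non-constant non-canalizing function on the remaining $n-k$ variables, of which there are $B^*(n-k,0)$. Multiplying and summing over all compositions gives $\binom{n}{k}\,B^*(n-k,0)\,2^{k+1}\sum\binom{k}{k_1,\dots,k_r}$. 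Adding the two cases and factoring out $\binom{n}{k}$ yields the stated formula.

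The step requiring the most care is verifying that this bookkeeping neither misses nor double-counts functions, particularly in the presence of fictitious variables. I would argue that the $k$ layer variables are exactly the essential variables appearing in the monomial layers, and are thus forced by the unique form, while all remaining variables --- including any fictitious variables of $f$ --- constitute the core variables; this makes the split into $k$ and $n-k$ variables well-defined. The decisive observation is that $B^*(n-k,0)$ counts \emph{all} non-constant non-canalizing functions on $n-k$ variables, including those with fictitious variables, and these are precisely the admissible core polynomials, so the count is exact. Confirming that the composition sum runs over all compositions of $k$ (rather than only those with $k_r\geq 2$, as in Lemma~\ref{li2013}) is the genuine content here, and it follows exactly because the exceptional cases are inactive once $p_C$ is non-constant.
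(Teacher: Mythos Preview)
Your proposal is correct and follows essentially the same approach as the paper's proof: split on whether $p_C\equiv 1$ or $p_C$ is non-constant, use $B(k,k)$ for the first case, and in the second case count the choices of layer variables, composition, signs, $b$, and core polynomial separately. Your write-up is, if anything, more careful than the paper's on two points the paper leaves implicit: why the sum over compositions in Case~2 is unrestricted (the exceptional cases of Theorem~\ref{thm:layers} require $p_C\equiv 1$, so $k_r\geq 2$ and the $b=0$ constraint are inactive), and how fictitious variables are absorbed into the core-variable count via $B^*(n-k,0)$.
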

\begin{proof}
We can construct a Boolean function $f$ on $n$ variables with canalizing depth $k$ by doing the following. First, pick $k$ variables that are not in the core polynomial $p_C$. There are ${n\choose k}$ different ways to do that. Once we fixed the variables that are not in $p_C$, we need to consider the following two cases:

\emph{Case 1}: $p_C\equiv 1$. Then $f$ is actually a nested canalizing function on these $k$ variables. There are $B(k,k)$ of them in total.

\emph{Case 2}: $p_C\not\equiv 1$. Then $p_C$ is a non-constant core polynomial on $n-k$ variables, so there are $B^*(n-k,0)$ different choices for $p_C$. Using the same argument as in Lemma \ref{li2013}, there are 
\[
2^{k}\sum\binom{k}{k_1,\ldots,k_r}
\]
different ways for those $k$ variables to form the extended monomials in Equation \eqref{eqn:layers}, where the sum is taken over all compositions of $k$. Note that we also have $2$ ways to pick $b$. Therefore, in this case, there are 
\[
B^*(n-k,0)\cdot 2^{k+1}\sum\binom{k}{k_1,\ldots,k_r}
\]
different Boolean functions.

By combining the above two cases, we get the formula for $B(n,k)$.
\end{proof}
\begin{ex}
As previously mentioned, the quantities $B(4,k)$ for $k=0,\dots,4$ were computed in \cite{ray2014analysis}. It is easy to check that these values are consistent with our general formula. There are $2^{2^4}=65536$ Boolean functions on $4$ variables. The number of functions with canalizing depth exactly $k$, for $k=1,2,3,4$ is 
\begin{align*}
B(4,4)&={4\choose 4}(736+0)=736 \\ B(4,3)&={4\choose 3}(64+0)=256 \\
B(4,2)&={4\choose 2}(8+2\cdot 8\cdot 3)=336. \\ B(4,1)&={4\choose 1}(2+136\cdot 4\cdot 1)=2184. 
\end{align*}
Summing these yields the total number of canalizing functions on $4$ variables,
\[
C_4=3512=736+256+336+2184=B(4,4)+B(4,3)+B(4,2)+B(4,1).
\]
Thus, there are $B(4,0)=65536-3512=62024$ non-canalizing functions on four variables, including the two constant functions.
\end{ex}
Note that $k$-canalizing functions are simply Boolean functions with depth at least $k$, therefore we immediately get the following equality.
\begin{cor}
The number of $k_0$-canalizing Boolean functions on $n$ variables, $1\leq k_0\leq n$, is given by:
\[
\sum_{k=k_0}^n B(n,k)=\sum_{k=k_0}^{n}{n\choose k}\left[B(k,k)+B^*(n-k,0)\cdot 2^{k+1}\sum\binom{k}{k_1,\ldots,k_r}\right].
\]
In particular, the canalizing functions are counted by the following identity:
\[
C_n=\sum_{k=1}^n B(n,k)=\sum_{k=1}^{n}{n\choose k}\left[B(k,k)+B^*(n-k,0)\cdot 2^{k+1}\sum\binom{k}{k_1,\ldots,k_r}\right].
\]
In both equations, the last sum is taken over all compositions of $k$.
\end{cor}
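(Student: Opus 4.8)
The plan is to treat this as an immediate corollary of the theorem evaluating $B(n,k)$, relying only on the definition of $k_0$-canalizing together with the well-definedness of canalizing depth. First I would recall that, by definition, a Boolean function is $k_0$-canalizing exactly when its canalizing depth is at least $k_0$. Since Proposition~\ref{prop:well-defined} shows that the canalizing depth is an invariant of $f$, independent of the representing permutation, the sets of $n$-variable functions with depth \emph{exactly} $k$, as $k$ ranges over $\{k_0,\dots,n\}$, are pairwise disjoint and together exhaust the collection of all $k_0$-canalizing functions. Each such set has cardinality $B(n,k)$, so the total count is $\sum_{k=k_0}^n B(n,k)$.

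The remaining step is purely formal. I would substitute the closed form for $B(n,k)$ from the preceding theorem into this sum, with the inner sum taken over all compositions of $k$ and $B(k,k)$ supplied by Lemma~\ref{li2013}; this produces the first displayed identity directly. For the ``in particular'' clause, I would set $k_0=1$: a function is canalizing if and only if it is $1$-canalizing, that is, has canalizing depth at least $1$, the depth-$0$ functions being precisely the non-canalizing ones. Hence $C_n=\sum_{k=1}^n B(n,k)$, and the same substitution yields the stated identity for $C_n$.

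There is no genuine obstacle, since all the combinatorial content is already contained in the closed form for $B(n,k)$ established above. The only points requiring care are bookkeeping: confirming that the depth ranges over $\{0,1,\dots,n\}$---it cannot exceed $n$ because there are only $n$ variables to peel off---so that the summation limits from $k_0$ to $n$ are exhaustive, and that depth $0$ (the non-canalizing functions) is exactly what is excluded once $k_0\geq 1$. Given that Proposition~\ref{prop:well-defined} makes ``depth exactly $k$'' a well-defined partition of the Boolean functions, the corollary follows by a single reindexing.
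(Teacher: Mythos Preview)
Your proposal is correct and matches the paper's approach exactly: the paper simply remarks that $k$-canalizing functions are those with depth at least $k$ and states the corollary follows immediately, which is precisely the partition-by-exact-depth argument you spell out. If anything, you are more explicit than the paper in invoking Proposition~\ref{prop:well-defined} to justify that the depth-$k$ sets form a partition.
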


\section{Concluding remarks and future work}\label{sec:conclusions}

Canalizing Boolean functions were inspired by structural and dynamic features of biological networks. In this article, we extended results on NCFs and derived a unique extended monomial form of \emph{arbitrary} Boolean functions. This gave us a stratification of the set of $n$-variable Boolean functions by canalizing depth. In particular, this form encapsulates three invariants of Boolean functions: canalizing depth, dominance layer number and the non-canalizing core polynomial. By combining these three invariants, we obtained an explicit formula for the number of Boolean functions on $n$ variables with depth $k$. We also introduced the notion of $k$-canalizing Boolean functions, which we believe to be a promising framework for modeling gene regulatory networks. Our stratification yielded closed formulas for the number of $n$-variable Boolean functions of canalizing depth $k$. Perhaps more valuable than the exact enumerations is the fact that now it is straightforward to derive asymptotics for the number of such functions as $n$ and $k$ grow large. 

In future work, we will investigate well-known Boolean network models and compute the canalizing depth of the proposed functions. We are working on reverse-engineering algorithms that construct Boolean network models from partial data. In particular, how can one find the function with the maximum canalizing depth that fits that data, and whether the set of $k$-canalizing functions in the model space has an inherent algebraic structure. Progress has been made on these problems for general Boolean functions without paying attention to canalizing depth, and for NCFs. For example, for the general reverse-engineering problem, the set of feasible functions (i.e., the ``model space'') is a coset $f+I$ in the polynomial ring $\F_2[x_1,\dots,x_n]$, where $I$ is the ideal of functions that vanish on the data-set; see~\cite{hinkelmann2012inferring}. Can we get more refined results by restriction to $k$-canalizing functions? The set of nested canalizing functions can be parametrized by a union of toric algebraic varieties~\cite{jarrah2007discrete}. It is relatively straightforward to show that the set of $k$-canalizing functions admits a similar parametrization, but it is not clear whether this has any actual utility for modeling. 

Another avenue of current research extends the work in the electrical engineering community on the unate cascade functions. Recall that these are precisely the NCFs, and they are precisely the functions whose binary decision diagrams have minimum average path length, and this can be explicitly computed. Similarly, we can compute the minimum average path length of a binary decision diagram of a $k$-canalizing function. 

Finally, much of the work in this paper should be able to be extended to multi-state (rather than Boolean) functions. As long as $K$ is a finite field, then $n$-variable functions over $K$ are polynomials in the ring $K[x_1,\dots,x_n]$. The definition of an NCF was extended from Boolean to multi-state functions in \cite{murrugarra2012number}, where the authors also enumerated these functions. Some of the proof techniques in this current paper specifically use the fact that $K=\F_2$, and it is not clear how well they would extend to general finite fields. However, there should absolutely be a stratification of multi-state functions by canalizing depth. The problem of enumerating $k$-canalizing multi-state functions seems to be challenging but still within reach. 


\bibliographystyle{alpha}

\newcommand{\etalchar}[1]{$^{#1}$}

\end{document}